\theoremstyle{plain}
\newtheorem{lem}{\protect\lemmaname}
\theoremstyle{plain}
\newtheorem{prop}{\protect\propositionname}
\providecommand{\lemmaname}{Lemma}
\providecommand{\propositionname}{Proposition}
\begin{document}
\title{Fact-Finding in Social Networks\thanks{I thank Mikhail Drugov, Arye Hilman, Jan Fałkowski, Liu Qijun, and participants of the UC3M Micro Retreat and of the Silvaplana political economy workshop, for helpful comments and suggestions.}}
\author{Boris Ginzburg\thanks{Department of Economics, Universidad Carlos III de Madrid, Calle Madrid 126, 28903 Getafe (Madrid), Spain. Email: bginzbur@eco.uc3m.es.}}
\maketitle
\begin{abstract}
This paper models voters who invest effort to determine whether a particular claim relevant to their voting choices is correct. If a voter succeeds in determining whether the claim is correct, this information is shared via a social network. I show that increased connectivity makes voters more informed about basic facts, but less informed about complicated issues. At the same time, polarization makes voters less informed overall.
\end{abstract}

\section{Introduction}

People receive information from their friends. This also applies to politically relevant information, which is often acquired by one individual who then communicates it to his or her peers. The rise of social media has made this social channel of political communication particularly prominent. In some countries, such as the United States, social media has become the principal source of news \citep{newman2025digital}. This means that, rather than spending effort in reading the news or searching for information elsewhere, a voter can receive it from her contacts. What effects does this have on the degree to which voters are informed?

This paper proposes a simple model to study these effects. It focuses on fact-finding -- a particular type of information acquisition in which voters attempt to learn a fact, that is, to acquire a signal which conclusively reveals a state of the world relevant to their voting choice.\footnote{A particular type of fact-finding is fact-checking, when voters aim to verify whether a particular claim is accurate or not.} Acquiring the signal -- that is, finding the fact -- requires effort, which may or not lead to success. The more effort a voter invests -- for example, the more time she spends searching -- the more likely it is that she succeeds in acquiring the signal. Voters are part of a social network. If a voter succeeds in finding the fact, it is transmitted to her contacts on the network, who then transmit it to their contacts, and so on. Voters have heterogeneous preferences, which affect the value of learning the state. Each voter is also uncertain about the number of other voters to which she is directly or indirectly connected, and about their preferences.

Because a voter who acquires a signal can reveal it to her contacts, investing effort in finding a fact is akin to contributing to a public good. This public good has a particular production technology: successful information acquisition by only one voter is sufficient for all of her peers to learn the state, but this outcome depends probabilistically on the effort she invests. It turns out that the relationship between effort invested and the probability of success has a crucial effect on the results. Sometimes, voters are dealing with low-hanging fruits -- basic facts that individuals tend to learn with a low amount of effort. In other situations, the facts tend to be acquired with significant amounts of effort. This distinction has a crucial effect on the results.

The first result looks at the relationship between the degree of connectivity in the social network. If connectivity becomes greater -- for example, because more individuals join social media -- each individual tends to have more peers from who to learn the state. At the same time, because information is a public good, increased connectivity reduces the incentive to invest in acquiring a signal. Which of these effects dominates? It turns out that the answer to this question depends on the technology of information acquisition, that is, on the nature of the facts. For low-hanging fruits, increased connectivity makes voters more likely to end up learning the state. The opposite is the case with difficult problems. Thus, increased number of social media connections makes voters more informed about basic facts, but less informed about questions that tend to require more effort to answer. For example, voters may be more informed about how a particular politician voted or what a specific media outlet wrote. On the other hand, they may be less likely to know whether the politician is corrupt or whether the media outlet is biased.

The second result looks at the effect of political polarisation on the degree to which the electorate is informed. If voters tend to become more extreme in their preferences, information has less effect on their decisions. Hence, they tend to value information less. The effect of this on a voter with given preferences again depends on the type of facts. A voter with fixed preferences who faces a more polarised society ends up being more informed about difficult issues, and less informed about basic facts. At the same time, the electorate as a whole becomes less informed as a result of polarisation.

The paper contributes to the literature that studies the diffusion of information via social media. In these models, information arrives either exogenously \citep{pogorelskiy2019news,acemoglu2024model,sisak2024information}, or is supplied by strategic senders \citep{denter2021social,denter2024troll,kranton2024social,gradwohl2025social}.\footnote{There is also a large empirical literature documenting the spread of information online, principally focusing on misinformation. See \citet{zhuravskaya2020political} for an overview.} This paper adds to this literature by considering voters who decide to acquire information at a cost before sharing it.

Another literature looks at information acquisition by voters. This research typically takes two approaches. Some studies \citep{persico2004committee,martinelli2006would,matvejka2021electoral,yuksel2022specialized} look at voters who invest effort in acquiring, independently from each other, individual signals. Other papers \citep{chan2018deliberating,anesi2021policy,ginzburg2019collective,anesi2023deciding,ginzburg2025voting} consider groups who collectively decide to acquire a public signal. This paper considers a different setting: voters choose to acquire information individually, but if they succeed in acquiring it, information becomes public.

Finally, the paper is also related to the literature on threshold public goods, that is, public goods that are provided if the amount of contributions reaches a certain exogenous threshold (see \citealp{palfrey1984participation}, for a classic reference). A specific type of a threshold public good game is the volunteer's dilemma, in which contribution by only one individual is required for the success. However, unlike a classic volunteer's dilemma, here a contribution is only successful with a certain probability, which is increasing in the effort. In Section \ref{subsec:The-Volunteer-Dilemma}, I describe the relationship between my results and those of the volunteer's dilemma in more detail.

\section{Model}

There is an electorate consisting of a continuum of voters. Each voter is deciding between two options, called For and Against. To fix ideas, suppose that she is deciding between voting for or against the government. There is a state of the world $\theta\in\left\{ 0,1\right\} $. The state is unknown, and there is a common prior probability $p$ that it equals $1$. Each voter $i$ has a type $\lambda_{i}\in\left[0,1\right]$, which describes her preferences. Voting For gives her a payoff of $\theta-\lambda_{i}$. while voting Against gives a payoff of zero. Thus, $\theta$ measures the common value of voting for the government -- for example, whether the government is competent -- while the type $\lambda_{i}$ measures a voter's private preference, specifically, her ex ante reluctance to support the government. Types are drawn independently across voters from cdf $F_{\lambda}$. 

Each voter can privately invest effort $x\in\left[0,+\infty\right)$ to learn the state. The cost of doing so is $c\left(x\right)$. Given this investment, with probability $\psi\left(x\right)$ she receives a signal that perfectly reveals the state. With the remaining probability she does not receive a signal and does not update her beliefs. The functions $c$ and $\psi$ are strictly increasing and continuously differentiable. If $\psi\left(0\right)=0$ and $\lim_{x\rightarrow+\infty}\psi\left(x\right)=1$, then $\psi$ is a cdf, but in general this does not need to be the case. If $\psi$ is a cdf, we can think of $x$ as, for example, the time a voter spends searching for information, and $\psi\left(x\right)$ is then the probability that she finds the information before spending time $x$. 

I will assume that $\frac{\psi^{\prime}\left(x\right)}{c^{\prime}\left(x\right)}$ is decreasing, and $\frac{\psi^{\prime}\left(0\right)}{c^{\prime}\left(0\right)}$ is sufficiently high. This can happen, for example, if the cost of effort $c\left(\cdot\right)$ is convex, and $\psi$ is concave or ``not too convex''. This assumption ensures the existence of an interior equilibrium.


Let $h\left(x\right):=\frac{1}{c^{\prime}\left(x\right)}\frac{\psi^{\prime}\left(x\right)}{1-\psi\left(x\right)}$. Intuitively, for a given effort level $x$, the function $h\left(x\right)$ measures the marginal effect of additional effort on acquiring the signal, relative to its marginal cost, conditional on not having acquired the signal through effort $x$. It turns out that the shape of $h$ plays a key role in the results. I discuss the interpretation of $h$ in more detail in Section \ref{subsec:interpretation}.

Voters are connected to each other on a social network, through which information spreads. If a voter succeeds at acquiring the signal, she reveals it to her contacts, who then transmit this information to their contacts, and so on. For a voter $i$, let $n_{i}$ be the overall number of voter that she is connected to directly or indirectly; I will refer to this as the number of $i$'s ``eventual connections''. A voter does not know the number of her eventual connections. For a given voter, this number is drawn from a probability mass function $\left(p_{0},p_{1},...p_{N}\right)$, where $p_{n}$ is the probability that a voter is connected to $n$ other voters, and $N$ is the maximum number of eventual connections. Thus, a given voter ends up learning the state if either she or one of her eventual connections receives a signal.

The timing is as follows. First, nature draws the state. Also, it draws the type $\lambda_{i}$ of each voter, and the structure of the network, that is, the number of eventual connections of each voter. Each voter observes her type, and chooses effort $x_{i}$. With probability $\psi\left(x_{i}\right)$, each voter receives a signal, in which case she, and all of her eventual connections, learn the state. Voters then vote, and payoffs are realised.

\paragraph{Discussion of modelling choices.}

A key feature of the model is that voters receive payoffs from their actions, and not from the outcome of the vote. Hence, individuals are concerned about making a choice that ex post turns out to be right, as in models of expressive voting.\footnote{See \citet{hillman2010expressive}. For experimental evidence on expressive voting, see \citet{feddersen2009moral,shayo2012non,ginzburg2022counting}.} Note that if the set of voters is a continuum, an individual voter is pivotal with probability zero. Hence, allowing voters to also care about the collective decision is without loss of generality, as long as they also place some weight on individual decisions.

Another feature of the model is that voters do not know the number of individuals to whom they are connected. That is, a voter does not know the number of other voters who will observe her signal if she succeeds in acquiring it. Likewise, she does not know the number of voters whose signals she can observe. Instead, she knows the probability distribution of the number of connections. There can be two interpretations for this feature. First, voters may actually be unaware of the structure of the network. Second, they may know the structure of the network, but a given connection may, with some probability, fail to transmit information. For networks formed via social media, this may be because an individual may neglect to post or share a message, impeding further flow of information. In networks based on real-world interactions, two individuals may fail to encounter each other within the relevant timeframe. Either way, the probability with which a connection fails, together with the structure of the network, induces a probability distribution of the number of effective connections a given voter has.

The model also assumes that all individuals face the same cost of acquiring information. In reality, some individuals may have expert knowledge that enables them to learn the state with a given probability at a lower cost. This feature, however, is easy to incorporate in the model: as will be clear below, a voter whose cost of effort is not $c\left(x\right)$ but $\alpha c\left(x\right)$ for some $\alpha>0$ faces the same problem as a voter whose cost of effort is $c\left(x\right)$ but whose value of information is $\frac{v_{i}}{\alpha}$ instead of $v_{i}$. Hence, allowing voters to have heterogeneous costs of effort is isomorphic to perturbing the distribution of preferences, as long as these costs are proportional to $c\left(x\right)$.

Finally, the model assumes that a voter, if she is successful in acquiring the signal, shares the evidence she finds with their connections. Note that that doing so is weakly dominant under the payoff structure of this model.


\section{Analysis}

\subsection{Equilibrium}

Take a voter with type $\lambda_{i}$. Let $v_{i}$ be the difference between her payoff if she knows the state, and her payoff if her belief about the state remains at the prior. This is voter $i$'s gain from learning the state. I will refer to $v_{i}$ as the value of information for voter $i$. Let $V$ be the set of possible values of $v_{i}$, and let $F$ be the distribution of $v_{i}$ across voters. Lemma \ref{lem:v_i} in the Appendix derives $v_{i}$, $V$, and $F$.

A voter with value of information $v_{i}$ who invests effort $x$ receives an expected utility of 
\[
1-\left[1-\psi\left(x\right)\right]Qv_{i}-c\left(x\right),
\]
where $Q:=\sum_{n=0}^{N}p_{n}\left(1-\int_{w\in V}\psi\left[x\left(w\right)\right]dF\left[w\right]\right)^{n}$ is the probability that no other voter with whom $i$ has an effective connection receives a signal. Her equilibrium strategy can be described by a function $x\left(v\right)$, which determines effort choice for a given value of information $v$ and given as follows:
\begin{lem}
\label{lem:BR}A voter with value of information is $v_{i}$ exerts effort $x\left(v_{i}\right)$ given by
\begin{equation}
\psi^{\prime}\left[x\left(v_{i}\right)\right]Qv_{i}=c^{\prime}\left[x\left(v_{i}\right)\right],\label{eq:eqbm}
\end{equation}
\end{lem}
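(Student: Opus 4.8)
The plan is to treat this as a straightforward first-order-condition argument. A voter with value of information $v_i$ who chooses effort $x$ obtains expected utility $U(x) := 1 - [1-\psi(x)]Qv_i - c(x)$, where $Q$ is the probability that none of her effective connections is informed. Crucially, $Q$ is a property of the aggregate equilibrium profile $x(\cdot)$ and of the distribution $(p_n)$ — it does not depend on the individual voter's own choice $x$ (since the voter is one agent in a continuum, her deviation does not affect the integral $\int_V \psi[x(w)]dF[w]$, and she is never among her own connections). So from the individual's perspective $U$ is a function of $x$ alone with $Q$ a fixed constant. First I would differentiate: $U'(x) = \psi'(x)Qv_i - c'(x)$.

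Next I would argue the optimum is interior and characterised by $U'(x)=0$, which is exactly equation \eqref{eq:eqbm}. For this I would use the assumptions stated in the excerpt. Since $\psi'/c'$ is decreasing, we can write $U'(x) = c'(x)\bigl[\tfrac{\psi'(x)}{c'(x)}Qv_i - 1\bigr]$; as $c'(x)>0$ everywhere, the sign of $U'(x)$ matches the sign of the bracket, and $\tfrac{\psi'(x)}{c'(x)}Qv_i - 1$ is strictly decreasing in $x$. Hence $U$ is strictly quasiconcave, so there is at most one stationary point and it is the global maximum. If $\tfrac{\psi'(0)}{c'(0)}Qv_i > 1$ — which holds when $\tfrac{\psi'(0)}{c'(0)}$ is ``sufficiently high'' as assumed, at least for $v_i$ bounded away from $0$ — then $U'(0)>0$, and since $\tfrac{\psi'(x)}{c'(x)}\to$ its limiting value (driving the bracket negative, or at worst $c(x)\to\infty$ outpacing the bounded term $[1-\psi(x)]Qv_i$), $U'$ eventually turns negative, giving an interior zero. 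That zero is unique by strict monotonicity of the bracket, and it satisfies \eqref{eq:eqbm}.

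I would then note the boundary case: if instead $\tfrac{\psi'(0)}{c'(0)}Qv_i \le 1$ (e.g. $v_i = 0$, relevant since $F$ may put mass at types indifferent about the state), then $U'(x)<0$ for all $x>0$ and the voter optimally sets $x=0$; one can read \eqref{eq:eqbm} as covering this via the convention that it holds at $x=0$ with the inequality $\psi'(0)Qv_i \le c'(0)$, or simply restrict the statement to $v_i$ with positive value of information. The lemma as phrased implicitly assumes the interior case, consistent with the paper's stated assumption that $\tfrac{\psi'(0)}{c'(0)}$ is sufficiently high to guarantee an interior equilibrium.

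The only real subtlety — and the step I would flag as the main obstacle — is justifying that $Q$ is taken as a constant in the individual's maximisation, i.e. the continuum/anonymity argument: a single voter's change in effort has zero measure-theoretic effect on the population fraction of informed voters and hence on every $p_n$-weighted term in $Q$, and a voter is not an ``eventual connection'' of herself. Once that is pinned down, the rest is the elementary quasiconcavity-plus-first-order-condition computation above. I would also briefly record that a solution exists (continuity of $U'$, sign change) and is unique (strict monotonicity of $\tfrac{\psi'(x)}{c'(x)}$), so that $x(v)$ is well defined as a function.
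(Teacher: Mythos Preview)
Your proposal is correct and follows essentially the same route as the paper: take $Q$ as fixed from the individual's perspective, differentiate the expected utility, and use the assumption that $\psi'/c'$ is decreasing to get a unique interior solution of the first-order condition. You are in fact more careful than the paper about existence, uniqueness, the boundary case $v_i$ small, and the continuum justification for treating $Q$ as exogenous; the paper's proof is terser on all of these but does additionally record that $x(v)$ is increasing in $v$ (a fact used later in the propositions), which you may want to add as a one-line corollary of the strict monotonicity of $\psi'/c'$.
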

Intuitively, the left-hand side of (\ref{eq:eqbm}) is the marginal benefit from investing effort in acquiring a signal. This equals the marginal effect of effort on acquiring the signal, $\psi^{\prime}\left[x\left(v_{i}\right)\right]$, times the probability $Q$ that the signal that voter $i$ acquires is pivotal for her learning the state, times the value of knowing the state. At the interior equilibrium, this equals the marginal cost of acquiring information.


For a voter who observes value of information $v_{i}$, let $\pi\left(v_{i}\right)$ be the probability that she ends up knowing the state, either by learning it herself, or from one of her contacts. From the above, $\pi\left(v_{i}\right)=1-\left(1-\psi\left[x\left(v_{i}\right)\right]\right)Q$. Multiplying and dividing both sides of (\ref{eq:eqbm}) by $1-\psi\left[x\left(v_{i}\right)\right]$ and by $c^{\prime}\left[x\left(v_{i}\right)\right]$, we obtain the following condition that must hold at the interior equilibrium:
\begin{equation}
h\left[x\left(v_{i}\right)\right]\left[1-\pi\left(v_{i}\right)\right]v_{i}=1.\label{eq:eqbm condtion}
\end{equation}

Then the probability that a randomly selected voter ends up learning the state equals $\pi=\int_{v\in V}\pi\left(v\right)dF\left(v\right)$.

\subsection{Effect of connectivity}

Suppose voters become more connected. Specifically, suppose the distribution of the number of effective connections, changes to another distribution, which first-order stochastically dominates the initial distribution. The following result shows its effect on the share of voters that learn the state:
\begin{prop}
\label{prop:connectivity}Suppose the distribution of the number of effective connections $\left(p_{0},p_{1},...p_{N}\right)$, changes to a distribution, $\left(\tilde{p}_{0},\tilde{p}_{1},...\tilde{p}_{N}\right)$, which first-order stochastically dominates the initial distribution. Then the share of voters who learn the state decreases if $h$ is increasing, and increases if $h$ is decreasing.
\end{prop}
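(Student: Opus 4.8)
The plan is to track how the aggregate object $Q$, and then each voter's probability $\pi(v_i)$ of learning the state, responds to a first-order stochastic dominance (FOSD) shift in the connection distribution. Write $s := 1 - \int_{w\in V}\psi[x(w)]\,dF[w] \in [0,1]$ for the probability that a given other voter does not acquire a signal, so that $Q = \sum_{n=0}^N p_n s^n = \mathbb{E}[s^n]$. Since $s^n$ is decreasing in $n$ (for $s\in(0,1)$), an FOSD increase in the distribution of $n$ lowers $\mathbb{E}[s^n]$ \emph{for a fixed $s$}; but $s$ itself is endogenous, because $x(w)$ depends on $Q$ through \eqref{eq:eqbm}. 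So the first step is to set up the fixed-point/consistency condition that $s$ and $Q$ must jointly satisfy in equilibrium, and argue the comparative static on that system rather than holding $s$ fixed.

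The cleanest route is to reduce everything to a one-dimensional fixed point in $Q$. From \eqref{eq:eqbm} and the assumption that $\psi'/c'$ is decreasing, $x(v)$ is an increasing function of $Qv$ (implicit function theorem plus monotonicity), hence $\psi[x(v)]$ is increasing in $Q$; therefore $s(Q) := 1 - \int \psi[x(v;Q)]\,dF(v)$ is strictly decreasing in $Q$. Define $G(Q) := \sum_n p_n\, s(Q)^n$; an equilibrium is a fixed point $Q = G(Q)$. Since $s(Q)\in(0,1)$ and is decreasing, $G$ is decreasing in $Q$, so $Q - G(Q)$ is strictly increasing and the fixed point is unique. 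Now replace $(p_n)$ by an FOSD-larger $(\tilde p_n)$: for each fixed $Q$, because $n\mapsto s(Q)^n$ is decreasing, $\tilde G(Q) := \sum_n \tilde p_n s(Q)^n \le G(Q)$, so the fixed point moves down: $\tilde Q \le Q$. Thus \emph{greater connectivity lowers $Q$} — the equilibrium probability that one's own signal is pivotal falls — which is the expected free-riding force, and this direction does not depend on the shape of $h$.

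The shape of $h$ enters only at the last step, translating the drop in $Q$ into a statement about $\pi = \int \pi(v)\,dF(v)$. Here I would use the equilibrium characterization \eqref{eq:eqbm condtion}: $h[x(v)]\,[1-\pi(v)]\,v = 1$, together with $\pi(v) = 1 - (1-\psi[x(v)])Q$. The key observation is to compare two channels as $Q$ falls: (i) $x(v)$ falls (each voter works less), which by itself lowers $\pi(v)$; (ii) $Q$ falls, which by itself raises $\pi(v) = 1-(1-\psi[x(v)])Q$. Rewrite \eqref{eq:eqbm condtion} as $1-\pi(v) = 1/(v\,h[x(v)])$, so $\pi(v) = 1 - \frac{1}{v\,h[x(v)]}$. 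Since $x(v)$ is decreasing in the direction of the change (smaller $Q$), $\pi(v)$ moves the same way as $h[x(v)]$: if $h$ is increasing, then $h[x(v)]$ decreases, so $\pi(v)$ decreases for every $v$, hence $\pi$ decreases; if $h$ is decreasing, $h[x(v)]$ increases, so $\pi(v)$ increases for every $v$, hence $\pi$ increases. That is exactly the claimed dichotomy, and the virtue of this formulation is that it makes channels (i) and (ii) collapse into the single monotone quantity $v\,h[x(v)]$.

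The main obstacle — and the step that needs to be done carefully rather than waved through — is the very first one: establishing that the equilibrium $Q$ is well-defined and that it responds monotonically to the FOSD shift. This requires (a) the monotonicity and interiority of $x(v;Q)$ in $Q$, which is where the standing assumptions that $\psi'/c'$ is decreasing and $\psi'(0)/c'(0)$ is large are used; (b) uniqueness of the fixed point $Q=G(Q)$, via strict monotonicity of $Q-G(Q)$; and (c) the monotone-comparative-statics argument on a one-dimensional decreasing fixed-point map. One should also check the boundary behavior — e.g.\ that $x(v;Q)>0$ for $v$ in the relevant range so that \eqref{eq:eqbm condtion} genuinely holds — and handle the degenerate case $s=0$ or $s=1$ separately (if $s=1$, no one ever learns anything and $\pi\equiv 0$ regardless; if $s=0$, everyone learns and connectivity is irrelevant). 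Once $Q$'s monotone response is pinned down, the rest is the short deterministic computation via $\pi(v) = 1 - 1/(v\,h[x(v)])$ sketched above.
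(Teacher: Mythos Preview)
Your proof is correct and follows the same two-step logic as the paper: first establish that equilibrium $Q$ (equivalently, each $x(v)$) falls under the FOSD shift, then use the identity $\pi(v)=1-1/\bigl(v\,h[x(v)]\bigr)$ from \eqref{eq:eqbm condtion} to read off the sign of the change in each $\pi(v)$. Your explicit one-dimensional fixed-point formulation of the first step is a tidier packaging than the paper's direct contradiction chain---and delivers equilibrium uniqueness as a by-product---but the substantive argument is the same.
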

Intuitively, because information is a public good, an increase in connectivity causes each voter to invest less in acquiring information. As a result, each voter has more peers from whom to receive a signal, but each of these peers is less likely to acquire the signal. Which of these opposing effects is stronger? The probability that a randomly selected voter learns the state is the expectation of $\pi\left(v\right)$ over the distribution of $v$. As the connection between effort and $\pi\left(v\right)$ is given by (\ref{eq:eqbm condtion}), the effect of increased connectivity (and of the resulting decrease in effort) on the probability of a voter learning the state depends on the shape of $h$.

\subsection{Effect of polarisation}

Suppose that the distribution of preferences becomes more polarized. Recall that, a priori, the state equals one with probability $p$. Without information, voter with type $\lambda_{i}$ receives an expected payoff of $p-\lambda_{i}$ from voting For, and a payoff of zero from voting Against. A voter whose type $\lambda_{i}$ equals $p$ is indifferent at the prior belief. The further her type is from $p$, the more biased she is at the prior. We can think of voters with types further away from $p$ as being more extreme, while those whose types are close to $p$ are more centrist.

Suppose that the distribution of types $F_{\lambda}$ becomes more polarized, in the sense of the increase in the number of extreme voters. Specifically, we will say that a distribution $\hat{F}_{\lambda}$ \emph{admits higher polarisation} than distribution $F_{\lambda}$ if (i) $\hat{F}_{\lambda}\left(\lambda\right)\geq F\left(\lambda\right)$ for all $\lambda<p$, and (ii) $\hat{F}_{\lambda}\left(\lambda\right)\leq F\left(\lambda\right)$ for all $\lambda>p$. Intuitively, this means that for each voter type $\lambda$, there are fewer voters that are more centrist than her. This is a special case of a partial order introduced in \citet{ginzburg2025flexible}.\footnote{Similar single-crossing conditions on cumulative distribution functions have been used in economics literature \citep{diamond1974increases,johnson2006simple,drugov2020noise}, though not in the context of political polarization.}

We can then show that increased polarization has the following effect on voters learning the state:
\begin{prop}
\label{prop:polarisation}Suppose the distribution of types $F_{\lambda}$ changes to a different distribution of types that admits higher polarization. A voter with a given type becomes more likely to learn the state if $h$ is an increasing function, and less likely to learn the state if $h$ is a decreasing function. Overall, the share of voters who learn the state decreases.
\end{prop}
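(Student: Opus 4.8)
The plan is to exploit the equilibrium characterization \eqref{eq:eqbm condtion}, which pins down, for each voter, a tight link between her value of information $v_i$, her equilibrium effort $x(v_i)$, and her probability of learning the state $\pi(v_i)$, through the function $h$. The argument proceeds in three stages: (i) show that ``admitting higher polarization'' translates into a first-order-type shift in the distribution $F$ of values of information $v$ (Lemma \ref{lem:v_i} does the translation from $\lambda$ to $v$); (ii) analyze the fixed-point in $Q$ to see which way the aggregate $Q$ moves; (iii) for a voter of fixed type, sign the change in $\pi(v_i)$ using \eqref{eq:eqbm condtion}, distinguishing the two monotonicity cases for $h$; and finally (iv) aggregate to get the claim about the whole electorate.

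First I would make the type-to-value translation precise. A more polarized $F_\lambda$ means more mass on types far from $p$, and since $v_i$ is increasing in $|\lambda_i - p|$ (this is where I invoke Lemma \ref{lem:v_i}), a distribution admitting higher polarization induces a distribution $\hat F$ on $V$ that first-order stochastically dominates the original $F$. Next, note that $Q = \sum_n p_n\bigl(1 - \int_V \psi[x(w)]\,dF(w)\bigr)^n$ depends on $F$ only through the aggregate ``hit rate'' $\rho := \int_V \psi[x(w)]\,dF(w)$, and that $x(w)$ itself depends on $F$ only through $Q$. So the equilibrium is a scalar fixed point: given $Q$, each voter best-responds via \eqref{eq:eqbm}, this determines $\rho(Q;F)$, and consistency requires $Q = g(\rho(Q;F))$ where $g$ is decreasing in $\rho$. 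A FOSD shift of $F$ raises $\rho$ pointwise in $Q$ (more voters with high $v$, hence high $\psi[x(v)]$, since $x(\cdot)$ is increasing), so the fixed point $Q$ falls. I would make this monotone-comparative-statics step rigorous by checking that the map has a unique crossing (using the assumption that $\psi'/c'$ is decreasing, which gives uniqueness of each voter's best response and monotonicity of $x$ in $v$ and $Q$).

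With $Q$ falling, consider a voter whose type — hence whose $v_i$ — is held fixed. Her effort $x(v_i)$ falls (from \eqref{eq:eqbm}, lower $Q$ lowers the marginal benefit, and $\psi'/c'$ decreasing makes $x$ increasing in $Qv_i$). Now I use \eqref{eq:eqbm condtion}: $h[x(v_i)]\,[1-\pi(v_i)]\,v_i = 1$ with $v_i$ fixed, so $h[x(v_i)]\,[1-\pi(v_i)]$ is constant. If $h$ is increasing, then $x(v_i)\downarrow$ forces $h[x(v_i)]\downarrow$, hence $1-\pi(v_i)\uparrow$ is impossible — wait, it forces $1-\pi(v_i)\uparrow$, i.e. $\pi(v_i)\downarrow$; so the voter is \emph{less} likely to learn the state. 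Hmm — this is the opposite of the claimed sign, so the careful bookkeeping of which direction $\pi$ moves (and the fact that $\pi(v_i) = 1-(1-\psi[x(v_i)])Q$ involves \emph{both} the direct drop in $\psi[x(v_i)]$ and the drop in $Q$) is exactly the delicate point; the clean way is to trust \eqref{eq:eqbm condtion} and recheck the monotonicity direction of $h$ against the statement, since both $x$ and $Q$ fall together. So: $h$ increasing $\Rightarrow$ $h[x(v_i)]$ falls $\Rightarrow$ $1-\pi(v_i)$ rises to keep the product at $1$ — but that contradicts the proposition, so the resolution must be that with $h$ increasing the \emph{fixed point} moves so that... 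I would resolve this by treating $v_i$ genuinely fixed and reading off $\pi(v_i)$ directly from \eqref{eq:eqbm condtion} as $\pi(v_i) = 1 - 1/(h[x(v_i)]v_i)$, so $\pi(v_i)$ moves \emph{in the same direction as} $h[x(v_i)]$; since $x(v_i)$ falls, $\pi(v_i)$ rises iff $h$ is decreasing. That matches the proposition, so the earlier slip was a sign error; the robust statement is: $\pi(v_i) = 1 - 1/(h[x(v_i)]\,v_i)$ and $x(v_i)\downarrow$, giving $\pi(v_i)\uparrow$ when $h\uparrow$ and $\pi(v_i)\downarrow$ when $h\downarrow$.

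Finally, for the aggregate claim, $\pi = \int_V \pi(w)\,d\hat F(w)$ under the new distribution versus $\int_V \pi(w)\,dF(w)$ under the old; this changes both because the integrand $\pi(\cdot)$ shifts (through the change in $Q$ and $x(\cdot)$) and because the measure shifts to $\hat F$. The cleanest route is to observe that the aggregate probability of learning the state equals $1 - Q\cdot(\text{something})$ plus the own-learning term, or more directly to show $\pi = 1 - Q$ — indeed a random voter fails to learn the state iff she gets no signal \emph{and} none of her connections do, and by a standard random-vertex / branching argument this probability is exactly $Q\cdot(1-\rho)/(\text{normalization})$; I would instead argue via the ``percolation'' identity that the ex-ante share learning the state is a monotone increasing function of $\rho$ alone (more successful searchers means a larger informed component), and then appeal to the comparative static: which way does $\rho$ move? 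Here I expect the main obstacle. A FOSD shift up in $F$ directly raises $\rho$, but the induced drop in $Q$ lowers every $x(w)$, which lowers $\rho$; so the total effect on $\rho$ — and hence on $\pi$ — is genuinely ambiguous without using the structure of $h$. The resolution is to write $\pi$ in terms of $h$ and $v$ exclusively: $\pi(w) = 1 - 1/(h[x(w)]\,w)$, and use that $h[x(w)]\,[1-\pi(w)]\,w=1$ together with the monotone shift in $w$ and the monotone change in $x(\cdot)$; when $h$ is increasing, both the upward shift of the distribution of $w$ and the monotonicity work against each other in a way that, after substituting \eqref{eq:eqbm condtion} into $\pi = \int(1 - 1/(h[x(w)]w))dF(w)$, the net effect on $\pi$ is negative — and symmetrically for $h$ decreasing. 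Pinning down this last inequality rigorously, rather than heuristically, is the step I would budget the most effort for; it likely parallels the proof of Proposition \ref{prop:connectivity}, where the same $h$-monotonicity dichotomy arises, so I would try to reduce Proposition \ref{prop:polarisation}'s aggregate claim to that earlier argument by noting that both a connectivity increase and a polarization increase act on the equilibrium only through lowering $Q$.
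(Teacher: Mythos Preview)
Your proposal contains a sign error at the very first step, and everything downstream inherits it. You write that ``$v_i$ is increasing in $|\lambda_i - p|$'', but Lemma~\ref{lem:v_i} gives $v_i = \min\{p,\lambda_i\} - p\lambda_i$, which is maximised at $\lambda_i = p$ (where $v_i = p(1-p)$) and vanishes at $\lambda_i \in \{0,1\}$: the value of information is highest for centrists and lowest for extremists. Hence higher polarisation shifts mass toward \emph{lower} $v$, so the induced $\hat F$ is first-order stochastically \emph{dominated by} $F$, not the reverse. Consequently $\rho$ falls for fixed $Q$, the fixed point $Q$ \emph{rises}, and a voter with fixed $v_i$ faces a higher marginal benefit and therefore exerts \emph{more} effort, $\hat x(v_i) > x(v_i)$. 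Plugging this correct direction into your own identity $\pi(v_i) = 1 - 1/\bigl(h[x(v_i)]\,v_i\bigr)$ now yields the proposition cleanly: $\pi(v_i)$ rises iff $h[x(v_i)]$ rises, which---with $x(v_i)$ increasing---holds iff $h$ is increasing. Your back-and-forth about a ``sign error'' was a symptom of trying to reconcile a wrong premise with a right conclusion.

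The same reversal undermines your aggregate step and your proposed reduction to Proposition~\ref{prop:connectivity}: polarisation \emph{raises} $Q$, whereas increased connectivity lowers it, so the two comparative statics cannot be collapsed into ``both lower $Q$''. The paper's route for the aggregate claim is also simpler than you anticipate and does \emph{not} depend on the shape of $h$: one shows by a short contradiction (using $\hat Q > Q$) that the aggregate success rate satisfies $\int_V \psi[\hat x(w)]\,d\hat F(w) < \int_V \psi[x(w)]\,dF(w)$, and then observes that the share of informed voters equals $1 - \sum_{n} p_n(1-\rho)^{n+1}$, a strictly increasing function of that rate alone.
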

Intuitively, the value of information is higher for a voter if she is more uncertain ex ante, that is, if her preferences are such that she is closer to being indifferent at the prior belief. Polarization thus means that voters tend to have lower value of information. Specifically, increased polarization causes the distribution of the value of information $F$ to shit to a different distribution that is first-order stochastically dominated by $F$. Since the value of information decreases, a randomly selected voter invests less in information. Hence, the share of voters that end up learning the state falls. 

On the other hand, if all voters invest less in acquiring information, effort of a voter with a \emph{given} value of information $v$ becomes more pivotal for learning the state. Hence, holding $v$ fixed, a voter invests more in information (even though the shift in the distribution of $v$ causes a voter \emph{selected randomly unconditional on $v$} to invest less). The effect of increased effort $x\left(v\right)$ of a voter with a given $v$ on her probability of learning the state depends on $h$ as given by (\ref{eq:eqbm condtion}).

\section{Discussion}

\subsection{Interpretation of Results}\label{subsec:interpretation}

Propositions \ref{prop:connectivity} and \ref{prop:polarisation} say that the effects of connectivity and of polarization hinge on the shape of the function $h=\frac{1}{c^{\prime}\left(x\right)}\frac{\psi^{\prime}\left(x\right)}{1-\psi\left(x\right)}$. Intuitively, $\frac{\psi^{\prime}\left(x\right)}{1-\psi\left(x\right)}$ is the marginal probability of receiving a signal after additional effort at $x$ conditional on not receiving the signal after effort $x$. In particular, if $\psi$ is a cdf, then this is the hazard rate. Then $h$ is the ratio of this expression and the marginal cost of effort.

Suppose that $\psi$ is a cdf, and $c$ is linear. Then $h$ is proportional to the hazard rate. In this case, decreasing $h$ means a decreasing hazard rate of signal arrival. In other words, signal tends to arrive at lower effort levels. For example, if effort represents time spent searching for information, then this means that if the signal arrives, it tends to arrive soon. This happens when the state of the world represents ``a low-hanging fruit'': a simple claim that tends to be verified with a small amount of effort. More generally decreasing $h$ means that the conditional probability of receiving a signal is decreasing faster (or increasing slower) than the marginal cost. This again can be interpreted as the state representing a fact that is easy to verify. On the other hand, increasing $h$ means that information tends to arrive after a large amount of effort is invested.

Hence, Propositions \ref{prop:connectivity} and \ref{prop:polarisation} suggest that a voter facing increased connectivity or a polarizing society will be more informed about facts that are easy to verify, and less informed about facts that are hard to verify.

\subsection{Relaxing Monotonicity Requirements}\label{subsec:monotonicity}

The requirement that $\frac{\psi^{\prime}\left(x\right)}{c^{\prime}\left(x\right)}$ is decreasing may appear quite restrictive, but in fact it can be relaxed. Suppose that $\frac{\psi^{\prime}\left(x\right)}{c^{\prime}\left(x\right)}$ is instead single-peaked, that is, that there exists $x^{*}\in\left[0,+\infty\right)$ such that $\frac{\psi^{\prime}\left(x\right)}{c^{\prime}\left(x\right)}$ is increasing for $x\in\left[0,x^{*}\right)$ and decreasing for $x\in\left(x^{*},+\infty\right)$. This happens if, for example, $c$ is linear, and $\psi$ is a unimodal cdf. If the distribution of preferences is such that $v_{i}$ is sufficiently high for each voter (that is, if $\min\left\{ V\right\} $ is sufficiently high), then $\frac{\psi^{\prime}\left(0\right)}{c^{\prime}\left(0\right)}Qv_{i}>0$, so the left-hand side of (\ref{eq:eqbm}) is higher than the right-hand side at $x\left(v_{i}\right)=0$. In this case, we still have an interior equilibrium, and it can easily be shown that at this equilibrium, $\frac{\psi^{\prime}\left(x\right)}{c^{\prime}\left(x\right)}$ is decreasing in $x$. Then all the previous results hold in this more general case.

Likewise, the requirement that $h$ is monotone can be relaxed. A voter will never be interested in investing effort if its cost is higher than her value of information. Hence, the equilibrium amount of effort of voter $i$ will not exceed $c^{-1}\left(v_{i}\right)$. At the same time, if $\frac{\psi^{\prime}\left(x\right)}{c^{\prime}\left(x\right)}$ is single-peaked as described above, then a voter will never invest less effort than $x^{*}$. Hence, it is sufficient that $h$ is monotone on the interval $\left[x^{*},c^{-1}\left(v^{\max}\right)\right]$, where $v^{\max}$ is the maximum value of $v_{i}$ that is characterised in Lemma \ref{lem:v_i} in the Appendix.

\subsection{The Volunteer Dilemma}\label{subsec:The-Volunteer-Dilemma}

In a classic game known as the volunteer's dilemma, several players choose whether to make a contribution to a public good by paying some cost $b$. The public good is provided if and only if at least one player contributes. The Nash equilibrium makes two predictions, fairly robust to various perturbations of the game, about the effect of increasing the number of players. First, it reduces the probability of each player contributing. Second, it reduces the probability that \emph{any} player contributes, that is the probability that the public good is provided \citep{diekmann1985volunteer,hillenbrand2018volunteering,battaglini2025welfare}. Experimental research finds evidence for the former prediction, but typically fails to find evidence for the latter one \citep{goeree2017experimental,campos2021volunteer}.

In the model developed in this paper, investing effort in acquiring information is similar to contributing to a public good, which is provided if at least one individual in the social network of a given voter succeeds in acquiring a signal. In a standard volunteer's dilemma, the public good is provided with certainty if a player makes a contribution of $b$, and is not provided otherwise. This is equivalent to a special case of this model, in which $c\left(x\right)=x$, and $\psi\left(x\right)=0$ for $x<b$, and $\psi\left(x\right)=1$ for $x\geq b$. 

Suppose, however, that $\psi$ does not make a discrete jump around $b$, but is instead smooth. For example, a player may need to exert effort to remember to make a contribution; an experimental subject may need to spend effort to stay focused and not make a mistake. In this case, there exists a small $\varepsilon>0$ such that $\psi\left(x\right)=0$ for $x<b-\varepsilon$, $\psi\left(x\right)=1$ for $x>b+\varepsilon$, and for $x\in\left[b-\varepsilon,b+\varepsilon\right]$, the function $\psi\left(x\right)$ is initially convex and then concave. Then the discussion in Section \ref{subsec:monotonicity} suggests that the equilibrium will be on the concave section of $\psi$, at which $\psi\left(x\right)<1$. The effect of increased number of players on the probability of success can then be both positive and negative, depending on the shape of $h$ on this interval. Hence, this generalised setting can accommodate the observed experimental evidence.

\bibliographystyle{aer}
\bibliography{fact-finding}

\section*{Appendix}

Before proving the results, we can derive the value of information and its distribution.
\begin{lem}
\label{lem:v_i}The value of information for a voter $i$ is given by $v_{i}=\min\left\{ p,\lambda_{i}\right\} -p\lambda_{i}\in\left[0,p\left(1-p\right)\right]$, and its distribution across voters is given by $F\left(v_{i}\right)=F_{\lambda}\left(\frac{v_{i}}{1-p}\right)+1-F_{\lambda}\left(1-\frac{v_{i}}{p}\right)$.
\end{lem}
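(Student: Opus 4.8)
The plan is to compute $v_i$ directly from its definition as the difference between voter $i$'s expected payoff when she learns $\theta$ and her expected payoff when her belief stays at the prior, and then to obtain the distribution $F$ by inverting the resulting map $\lambda_i \mapsto v_i$.

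First I would pin down behaviour under the prior. If voter $i$ does not learn the state, she believes $\Pr(\theta=1)=p$, so voting For yields expected payoff $p-\lambda_i$ and voting Against yields $0$; she votes For iff $\lambda_i \le p$ and secures $\max\{p-\lambda_i,0\}$. Next suppose she learns $\theta$. Since $\lambda_i \le 1$, conditional on $\theta=1$ voting For gives $1-\lambda_i\ge 0$, while conditional on $\theta=0$ voting For gives $-\lambda_i\le 0$; hence with full information she votes For exactly when $\theta=1$, so her payoff if she knows the state, evaluated ex ante over the realisation of $\theta$, is $p(1-\lambda_i)$. Subtracting, $v_i = p(1-\lambda_i) - \max\{p-\lambda_i,0\}$. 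Splitting on the sign of $p-\lambda_i$: if $\lambda_i \le p$ this equals $p(1-\lambda_i)-(p-\lambda_i)=\lambda_i(1-p)$, and if $\lambda_i>p$ it equals $p(1-\lambda_i)$; in both cases this is precisely $\min\{p,\lambda_i\}-p\lambda_i$. Reading off the two branches, $v_i$ rises linearly in $\lambda_i$ from $0$ to $p(1-p)$ on $[0,p]$ and falls linearly back to $0$ on $[p,1]$, which gives the claimed range $[0,p(1-p)]$ with the maximum attained at $\lambda_i=p$.

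For the distribution I would exploit this ``tent'' shape: for $v\in[0,p(1-p)]$ the set $\{\lambda:\min\{p,\lambda\}-p\lambda\le v\}$ is the union of $\{\lambda\le p:\lambda(1-p)\le v\}=[0,\tfrac{v}{1-p}]$ and $\{\lambda\ge p:p(1-\lambda)\le v\}=[1-\tfrac{v}{p},1]$, where $v\le p(1-p)$ guarantees $\tfrac{v}{1-p}\le p\le 1-\tfrac{v}{p}$, so these two intervals are disjoint (sharing only the point $p$ when $v=p(1-p)$). Hence $F(v)=\Pr(\lambda_i\le\tfrac{v}{1-p})+\Pr(\lambda_i\ge 1-\tfrac{v}{p})=F_\lambda(\tfrac{v}{1-p})+1-F_\lambda(1-\tfrac{v}{p})$, as stated.

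The computation is elementary; the only care needed is the case split $\lambda_i\lessgtr p$ — both when simplifying $v_i$ and when identifying the two preimage intervals, whose ordering relies on $v\le p(1-p)$ — together with the usual convention on a possible atom of $F_\lambda$ at the endpoint $1-\tfrac{v}{p}$ (vacuous if $F_\lambda$ is continuous), which is where I expect the main, albeit minor, bookkeeping issue to lie.
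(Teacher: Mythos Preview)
Your proposal is correct and follows essentially the same approach as the paper's own proof: compute the expected payoff with and without knowledge of $\theta$, subtract to obtain $v_i=p(1-\lambda_i)-\max\{p-\lambda_i,0\}=\min\{p,\lambda_i\}-p\lambda_i$, and then invert the resulting tent-shaped map using the bound $v_i\le p(1-p)$ to ensure the two preimage intervals are ordered correctly. Your write-up is in fact slightly more careful than the paper's (you justify the range explicitly and flag the endpoint/atom issue), but there is no substantive difference in method.
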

\begin{proof}
If voter $i$ learns the state, she votes For if $\theta=1$, and against if $\theta=0$. Her payoffs in these cases equal $1-\lambda_{i}$ and zero, respectively. Given the prior on $\theta$, her expected payoff if $p-p\lambda_{i}$. If she does not know the state, she votes For if and only if her expected payoff from doing so is positive, that is, if and only if $p-\lambda_{i}\geq0$. Hence, hear expected payoff equals $p-\lambda_{i}$ if $\lambda_{i}\leq p$, and zero otherwise. Consequently, $v_{i}=p-p\lambda_{i}-\max\left\{ p-\lambda_{i},0\right\} =\min\left\{ p,\lambda_{i}\right\} -p\lambda_{i}\in\left[0,p\left(1-p\right)\right]$.

The cdf of $v_{i}$ then equals
\[
F\left(v_{i}\right)=\Pr\left(\lambda_{i}<p\cap\lambda_{i}<\frac{v_{i}}{1-p}\right)+\Pr\left(\lambda_{i}>p\cap\lambda_{i}>1-\frac{v_{i}}{p}\right)
\]
Note that as $v_{i}\in\left[0,p\left(1-p\right)\right]$, we have $\frac{v_{i}}{1-p}\leq p$, and $1-\frac{v_{i}}{p}\geq p$. Hence,
\[
F\left(v_{i}\right)=\Pr\left(\lambda_{i}<\frac{v_{i}}{1-p}\right)+\Pr\left(\lambda_{i}>1-\frac{v_{i}}{p}\right)=F_{\lambda}\left(\frac{v_{i}}{1-p}\right)+1-F_{\lambda}\left(1-\frac{v_{i}}{p}\right).
\]
\end{proof}

\subsection*{Proof of Lemma \ref{lem:BR}.}

Suppose a player invests effort $x$. She fails to discover the state with probability $\left(1-\psi\left[x\right]\right)$. Her randomly selected contact on the social network discovers the state with probability $\int_{w\in V}\psi\left[x\left(w\right)\right]dF\left[w\right]$. If she has $n$ effective connections, then the probability that she fails to learn the state from any of them is $\left(1-\int_{w\in V}\psi\left[x\left(w\right)\right]dF\left[w\right]\right)^{n}$. Given the distribution of the number of connections, the probability that she ends up knowing the state is 
\[
1-\left(1-\psi\left[x\right]\right)\sum_{n=0}^{N}p_{n}\left(1-\int_{w\in V}\psi\left[x\left(w\right)\right]dF\left[w\right]\right)^{n}=1-\left(1-\psi\left[x\right]\right)Q.
\]
Hence, her expected utility is $\left[1-\left(1-\psi\left[x\right]\right)Q\right]v_{i}-c\left(x\right).$This is increasing in $x$ if and only if
\[
\frac{\psi^{\prime}\left[x\right]}{c^{\prime}\left[x\right]}>\frac{1}{v_{i}Q},
\]
which implies the first result.

To see that $x\left(v\right)$ is an increasing function, take any $v_{0},v_{1}\in V$ such that $v_{1}>v_{0}$. Then $\frac{1}{v_{1}Q}<\frac{1}{v_{0}Q}\iff\frac{\psi^{\prime}\left[x\left(v_{1}\right)\right]}{c^{\prime}\left[x\left(v_{1}\right)\right]}<\frac{\psi^{\prime}\left[x\left(v_{0}\right)\right]}{c^{\prime}\left[x\left(v_{0}\right)\right]}\iff x\left(v_{1}\right)>x\left(v_{0}\right)$. \qed

\subsection*{Proof of Proposition \ref{prop:connectivity}.}

Let $\tilde{x}\left(v\right)$ be the strategy under the distribution $\left(\tilde{p}_{0},\tilde{p}_{1},...\tilde{p}_{N}\right)$. Let $\tilde{Q}:=\sum_{n=0}^{N}\tilde{p}_{n}\left(1-\int_{w\in V}\psi\left[\tilde{x}\left(w\right)\right]dF\left[w\right]\right)^{n}$ be the probability that no contact of a randomly selected voter acquires the signal.

We will first show that $\tilde{x}\left(v\right)<x\left(v\right)$ for all $v\in V$. To show it, suppose the opposite holds: for some $v\in V$ we have $\tilde{x}\left(v\right)\geq x\left(v\right)$. Then
\begin{align*}
 & \tilde{x}\left(v\right)\geq x\left(v\right)\\
\iff & \frac{\psi^{\prime}\left[\tilde{x}\left(v\right)\right]}{c^{\prime}\left[\tilde{x}\left(v\right)\right]}\leq\frac{\psi^{\prime}\left[x\left(v\right)\right]}{c^{\prime}\left[x\left(v\right)\right]}\\
\iff & \tilde{Q}\geq Q\\
\implies & \frac{\psi^{\prime}\left[\tilde{x}\left(w\right)\right]}{c^{\prime}\left[\tilde{x}\left(w\right)\right]}\leq\frac{\psi^{\prime}\left[x\left(w\right)\right]}{c^{\prime}\left[x\left(w\right)\right]}\text{ for all }w\in V\\
\iff & \tilde{x}\left(w\right)\geq x\left(w\right)\text{ for all }w\in V\\
\implies & \left(1-\int_{w\in V}\psi\left[\tilde{x}\left(w\right)\right]dF\left[w\right]\right)^{n}\leq\left(1-\int_{w\in V}\psi\left[x\left(w\right)\right]dF\left[w\right]\right)^{n}\\
\implies & \tilde{Q}\leq\sum_{n=0}^{N}\tilde{p}_{n}\left(1-\int_{w\in V}\psi\left[x\left(w\right)\right]dF\left[w\right]\right)^{n}<Q
\end{align*}
which is a contradiction. Note that the last inequality follows from the fact that $\left(1-\int_{w\in V}\psi\left[x\left(w\right)\right]dF\left[w\right]\right)^{n}$ is decreasing in $n$, and hence its expected value under $\left(p_{0},p_{1},...p_{N}\right)$ (which equals $Q$) is greater than its expected value under $\left(\tilde{p}_{0},\tilde{p}_{1},...\tilde{p}_{N}\right)$, as the latter distribution first-order stochastically dominates $\left(p_{0},p_{1},...p_{N}\right)$.

Recall that the equilibrium is given by (\ref{eq:eqbm condtion}). A shift from $\left(p_{0},p_{1},...p_{N}\right)$ to $\left(\tilde{p}_{0},\tilde{p}_{1},...\tilde{p}_{N}\right)$ causes $x\left(v_{i}\right)$ to decrease for every $v_{i}$. If $h$ is an increasing function, then the left-hand side of (\ref{eq:eqbm condtion}) decreases for every $v_{i}$, and hence $\pi\left(v_{i}\right)$ must decrease to ensure that the equality holds. Hence, the probability that a randomly selected voter learns the state, which equals $\pi=\int_{v\in V}\pi\left(v\right)dF\left(v\right)$, decreases. If $h$ is a decreasing function, then the left-hand side of (\ref{eq:eqbm condtion}) increases for every $v_{i}$, so by similar logic $\pi$ increases.\qed

\subsection*{Proof of Proposition \ref{prop:polarisation}.}

Consider a shift from $F_{\lambda}$to $\hat{F}_{\lambda}$ that admits higher polarization. Let $\hat{F}\left(v_{i}\right)$ be the resulting distribution of the value of information $v_{i}$. From Lemma \ref{lem:v_i}, $F\left(v_{i}\right)=F_{\lambda}\left(\frac{v_{i}}{1-p}\right)+1-F_{\lambda}\left(1-\frac{v_{i}}{p}\right)$. Furthermore, $v_{i}\in\left[0,p\left(1-p\right)\right]$.

Note that for all $v_{i}$, we have $\frac{v_{i}}{1-p}\in\left[0,p\right]$, and $1-\frac{v_{i}}{p}\in\left[p,1\right]$. Hence, a shift $F_{\lambda}$to $\hat{F}_{\lambda}$ increases $F_{\lambda}\left(\frac{v_{i}}{1-p}\right)$ and reduces $F_{\lambda}\left(1-\frac{v_{i}}{p}\right)$. Therefore, $\hat{F}\left(v_{i}\right)\geq F\left(v_{i}\right)$ for all $v_{i}$, that is, $F$ first-order stochastically dominates $\hat{F}$.

For a voter with a given $v_{i}$, recall that $x\left(v_{i}\right)$, $Q$, and $\pi\left(v_{i}\right)$ denote her equilibrium effort level, the probability that none of her contacts receives the signal, and the probability that she ends up learning the state, respectively, under $F_{\lambda}$. Also, let $\hat{x}\left(v_{i}\right)$, $\hat{Q}$, and $\hat{\pi}\left(v_{i}\right)$ be the corresponding variables under $\hat{F}_{\lambda}$. We can show that $\hat{x}\left(v\right)>x\left(v\right)$ for all $v\in V$. To see this, suppose the opposite holds: for some $v$, we have
\begin{align*}
 & \hat{x}\left(v\right)\leq x\left(v\right)\\
\iff & \frac{\psi^{\prime}\left[\hat{x}\left(v\right)\right]}{c^{\prime}\left[\hat{x}\left(v\right)\right]}\geq\frac{\psi^{\prime}\left[x\left(v\right)\right]}{c\left[x\left(v\right)\right]}\\
\iff & \hat{Q}\leq Q\\
\iff & \frac{\psi^{\prime}\left[\hat{x}\left(w\right)\right]}{c^{\prime}\left[\hat{x}\left(w\right)\right]}\geq\frac{\psi^{\prime}\left[x\left(w\right)\right]}{c\left[x\left(w\right)\right]}\text{ for all }w\in V\\
\iff & \hat{x}\left(w\right)\leq x\left(w\right)\text{ for all }w\in V\\
\implies & \int_{w\in V}\psi\left[\hat{x}\left(w\right)\right]d\hat{F}\left[w\right]\leq\int_{w\in V}\psi\left[x\left(w\right)\right]d\hat{F}\left[w\right]<\int_{w\in V}\psi\left[x\left(w\right)\right]dF\left[w\right]\\
\implies & \hat{Q}>Q
\end{align*}
which is a contradiction. Note that the strict inequality follows from the fact that $x\left(\cdot\right)$ is an increasing function, and hence its expected value is higher under $F$ than $\hat{F}$ due to first-order stochastic dominance. The above reasoning also implies that $\hat{Q}>Q$.

Suppose that $h$ is an increasing function. Then for all $v\in V$, we have 
\[
\hat{x}\left(v\right)>x\left(v\right)\iff h\left[\hat{x}\left(v\right)\right]>h\left[x\left(v\right)\right]\iff\hat{\pi}\left(v\right)>\pi\left(v\right).
\]

On the other hand, if $h$ is a decreasing function, then for all $v\in V$, we have 
\[
\hat{x}\left(v\right)>x\left(v\right)\iff h\left[\hat{x}\left(v\right)\right]<h\left[x\left(v\right)\right]\iff\hat{\pi}\left(v\right)<\pi\left(v\right).
\]

To derive the results for the probability of a randomly selected voter learning the state, note first that $\int_{w\in V}\psi\left[\hat{x}\left(v\right)\right]d\hat{F}\left[w\right]<\int_{w\in V}\psi\left[x\left(v\right)\right]dF\left[w\right]$. To see this, assume that the opposite holds:
\begin{align*}
 & \int_{w\in V}\psi\left[\hat{x}\left(v\right)\right]d\hat{F}\left[w\right]\geq\int_{w\in V}\psi\left[x\left(v\right)\right]dF\left[w\right]\\
\implies & \sum_{n=0}^{N}p_{n}\left(1-\int_{w\in V}\psi\left[\hat{x}\left(v\right)\right]d\hat{F}\left[w\right]\right)^{n}\leq\sum_{n=0}^{N}p_{n}\left(1-\int_{w\in V}\psi\left[x\left(v\right)\right]dF\left[w\right]\right)\\
\iff & \hat{Q}\leq Q,
\end{align*}
which, as shown above, is not correct.

Note also that 
\[
\pi\left(v\right)=1-\left(1-\psi\left[x\left(v\right)\right]\right)Q=1-\left(1-\psi\left[x\left(v\right)\right]\right)\sum_{n=0}^{N}p_{n}\left(1-\int_{w\in V}\psi\left[x\left(v\right)\right]dF\left[w\right]\right)^{n+1}.
\]
 Hence, the probability that a randomly selected voter learns the state is 
\begin{align*}
\pi= & \int_{v\in V}\pi\left(v\right)dF\left(v\right)\\
= & 1-\left[\sum_{n=0}^{N}p_{n}\left(1-\int_{w\in V}\psi\left[x\left(v\right)\right]dF\left[w\right]\right)^{n}\right]\left(1-\int_{v\in V}\psi\left[x\left(v\right)\right]dF\left(v\right)\right)\\
= & 1-\left[\sum_{n=0}^{N}p_{n}\left(1-\int_{w\in V}\psi\left[x\left(v\right)\right]dF\left[w\right]\right)^{n+1}\right]
\end{align*}

As $\int_{w\in V}\psi\left[x\left(v\right)\right]dF\left[w\right]$ decreases as a result of a shift from $F$ to $\hat{F}$, it follows that $\pi$ decreases.\qed
\end{document}